\newcommand{\steps}{N}				
\newcommand{\ctgtipol}[2]{J^{#2}_{#1}}	
\newcommand{\tcf}{\ctgtipol{\steps}{*}}			
\newcommand{\tcfm}[1]{J^{*#1}_\steps}	
\newcommand{\st}{s}					
\newcommand{\stm}[1]{\st_{#1}}			
\newcommand{\stt}{\stm{T}}				
\newcommand{\stf}{\stm{\steps}}			
\newcommand{\sti}{\stm{0}}				
\newcommand{\stc}{\hat{\st}}			
\newcommand{\stcm}[1]{\hat{\stm{#1}}}
\newcommand{\str}{\st'}				
\newcommand{\sr}{x}					
\newcommand{\srt}{x_T}				
\newcommand{\src}{y}				
\newcommand{\SOS}{\mathcal{S}}		
\newcommand{\ti}{i}					
\newcommand{\tia}{k}				
\newcommand{\SOT}{I}				
\newcommand{\SOA}{A(\st,\ti)}		
\newcommand{\costonly}{g}
\newcommand{\cost}{\costonly(\st,\stc)}		
\newcommand{\costm}[1]{\costonly_{#1}(\st,\stc)}
\newcommand{\costsm}[1]{\costonly(\stm{#1},\stcm{#1})}
\newcommand{\polf}{\pi}				
\newcommand{\pol}{\polf(\st,\ti)}	
\newcommand{\polo}{\polf^*}			
\newcommand{\outcome}{\omega}
\newcommand{\bellmanit}[1]{\mathcal{U}[#1]}	
\newcommand{\bellmanitpow}[2]{\mathcal{U}^{#2}[#1]}  
\newcommand{\bellmanitapp}[1]{\tilde{\mathcal{U}}[#1]} 
\newcommand{\bloch}{\mathcal{B}}	
\newcommand{\Exp}{\mathsf{E}} 
\newcommand{\ee}{\mathsf{e}} 
\newcommand{\ii}{\mathsf{i}} 
\newcommand{\R}{\mathbb{R}}
\newcommand{\bra}[1]{\langle #1|}
\newcommand{\ket}[1]{|#1\rangle}
\newcommand{\braket}[2]{\langle #1|#2\rangle}
\newcommand{\dens}[1]{\ket{\psi_{#1}}\bra{\psi_{#1}}}
\newcommand{\adj}[1]{{#1}^{\dag}} 
\newcommand{\noadj}[1]{{#1}^{\vphantom\dag}}
\newcommand{\tr}{\operatorname{{\mathrm tr}}}
\newcommand{\norm}[1]{||#1||}
\theoremstyle{plain}
\newtheorem{proposition}{Proposition}
\theoremstyle{definition}
\theoremstyle{remark}
\theoremstyle{plain}
\begin{document}
\bibliographystyle{plain}
\title{Discrete Quantum Control - State Preparation}
\author{Jon R. Grice}
\author{David A. Meyer}
\affiliation{Department of Mathematics\\ University of California, San Diego}
\email{jgrice@math.ucsd.edu}
\email{dmeyer@math.ucsd.edu}
\date{April 28, 2012}
\pacs{03.67.Ac, 02.30.Yy}
\begin{abstract}
A discrete-time method for solving problems in optimal quantum control is presented. Controlling the time discretized Markovian dynamics of a quantum system can be reduced to a Markov-decision process. We demonstrate this method in this
with a class of simple one qubit systems, which are also discretized in space. For the task of state preparation we
solve the examples both numerically and analytically with dynamic programming techniques.
\end{abstract}
\maketitle

\section{Introduction}
	Quantum control is a discipline that will see its importance match any of the technologies that it enables --
	quantum information sciences, to take a prominent example.

	But, the problems which arise in Quantum control can be very difficult to solve. 
	This paper explores a possible direction towards formulating and solving problems from optimal
	quantum control by taking inspiration from quantum computing.
	
	The control of quantum systems has been studied for decades, see e.g. \cite{belavkin83}. 
	Because real systems and measurements are often continuous \cite{mabuchi2005},
	the usual technique in optimal quantum control is to
	study the stochastic master equation for the system being controlled. A sizable body of literature
	exists which details the control of continuous time quantum systems \cite{bvhj}.
	
	Some systems are inherently time-discrete like the Stern-Gerlach experiment or a quantum circuit,
	but any time-continuous system must be discretized if it is to be simulated on a quantum computer \cite{feynman82}.
	These time-discrete simulations are what we seek to control, in lieu of the continuous model of the system. 
	
	We model the system by slicing time into steps where the evolution on each step
	is given by a quantum circuit (including nonunitary quantum operations and measurements). 
	Each timestep's circuit effects a Markovian step on the system's state.
	Since we want to control a Markov chain we turn to the theory of Markov decision processes (MDPs) \cite{suttonbarto},
	specifically to the field of control theory called Discrete-Time Optimal Stochastic Control \cite{bertsekas}.
	
	A MDP consists of two parts, the agent and the system. The agent consists of the experimenter and a (classical) computer system, the controller. 
	The system is an open quantum system or a model thereof,
	interacting through its environment, from which a measurement apparatus takes measurements which are reported to the
	agent. The agent in turn can vary the Hamiltonian (apply controls) which acts on the system to some specified degree. The closed
	loop process of taking measurements and applying controls is repeated for either a certain amount of time or
	indefinitely, depending on the goal the experimenter has in mind. The controller is programmed to use the information
	from the measurements along with the knowledge of the system dynamics to apply the appropriate controls in order to
	best further the goals of the experimenter. In this paper we discuss methods for finding the best program, on average,
	that the controller should use.
	
The MDPs that arise in this setting have a continuous state space and transition matrices which arise from
	the dynamics of the system which give them more structure than most MDPs that are studied. It is this
	structure that gives us more power to solve these systems, for example, we can solve some MDPs analytically.

	We demonstrate solving a system by using two sample models, the affine-quadratic and the threshold models,
	so-named for the cost functions that determine their behavior.
These models have
	a simple enough state space (the interval $[0,1]$) that there is little difficulty 
	in discretizing that space and solving the Bellman equations that arise numerically. 
	
	The discrete technique presented in this paper is also applicable to more complicated models. A subsequent paper will study open systems undergoing decoherence
	and bipartite systems undergoing disentanglement.

	The two example models are chosen to be as elementary as possible,
	the system being that of a qubit undergoing a stochastic evolution in which the state remains pure.
	There is a desired target state that must be reached after a fixed amount of time, but the target state
	is unstable. 

	We solve the models first numerically and note peculiar behavior of the solutions, such as
	the as-if piecewise linear shape of the decision function graph,and corresponding parabolic shape of
	the value functions, which do not necessarily have a minimum at the target state. This means it is not
	good to stay near the target state if it is unstable and the end of the experiment is far off.
	As the value functions are propagated backwards in time as is done in dynamic programming fast convergence is evident for the value functions tending to a steady-state
	solution. This means that for a very long experiment relative to the probability of decaying away from the
	target state, the controller takes little action until the latter part of the experiment.
	
	Our parameterization of the measurement operators allows us to study the best measurements to make
	in order to control the system as efficiently as possible. Interestingly we find that having numerous
	(more than two) measurement operators can make the system easier to control, in the sense that the value functions
	are smaller compared to a system with only two measurement outcomes.
	
	These behaviors of the affine-quadratic model that are observed from numerical simulations can be explained
	by the analytic solution of the Bellman equation. Due to the simplicity of the system the Bellman equation can
	be approximately solved with known error bounds.
	\subsection{The state preparation problem}\label{distmeas}
		Algorithms in quantum computation often require an initial state to be prepared. The state preparation
		module would fill this need by preparing the desired state $\ket{\psi_T}$ (as closely as possible) at the correct time
		by starting with an arbitrary unprepared state \cite{jacobs06}. Oftentimes one can perform a
		projective measurement in which one of the outcomes is the desired state to be prepared. This 
		however, may not happen often enough, fast enough \cite{jacobs03}, or there may not be the physical resources
		available to make the correct state with a single measurement.
		
		In our two models the system begins in a known, but arbitrary pure state $\ket{\psi_0}$. After
		$N$ timesteps, the final state $\ket{\psi_N}$ is evaluated against the target state $\ket{\psi_T}$ (there
		can be more than one target state in general).
		For the cost function formulation of the finite horizon MDP (see below), the success of the state
		preparation process of measured by the terminal cost function $\tcf$. Clearly $\tcf$ should take its minima on
		the target states.
	
		For the two models described in this paper, there is a single target state $\ket{\psi_T}=\ket{0}$.
		The first model we deal with, the affine model, we choose
		\begin{equation}\label{eq:vnaff}
		\tcfm{\text{aff}}(\ket{\psi_N})=D_\text{tr}(\ket{\psi_N},\ket{\psi_T})=1-\braket{\psi_N}{\psi_T}^2,
		\end{equation}
		where $D_\text{tr}$ is the trace distance.
		Suppose a quantum algorithm expects $\ket{\psi_T}$ as an input but gets $\ket{\psi_N}$ instead.
		Then that algorithm performs worst when $\tcfm{\text{aff}}$ is greatest, in the following sense.
		Suppose the quantum algorithm is represented by a unitary
		operation $W$ applied to $\ket{\psi_T}$, followed by a POVM with operators $\{\Pi_i\}_i$.
		Outcome $i$ is observed with probability $\tr{\Pi_i\adj{W}\dens{T}W}$. This provides the
		probability distribution $\{\tr \Pi_i \adj{W}\dens{T}W \}_i$. Taking the $l^1$ (classical trace) distance between
		that distribution and the one obtained by using $\ket{\psi_N}$ as the input state,
		$\{\tr \Pi_i \adj{W}\dens{N}W\}_i$, we can get a reasonable measure of success.
		Since we are not specifying the particular algorithm we do not know the POVM that will be applied.
		However the quantum trace distance $D_\text{tr}(\ket{\psi_N},\ket{\psi_T})$
		provides an upper bound for the $l^1$ distance for any POVM\cite{NC}. 
		Since we want the measure of success to be
		smaller when the probability distributions are similar, we get equation \eqref{eq:vnaff}. 
		Equation \eqref{eq:vnaff} varies with the overlap (squared fidelity) of $\ket{\psi_N}$ and $\ket{\psi_T}$, the
		probability of observing one when making a measurement looking for the other.

		Open quantum systems can be approximated so that their environment ``forgets" its interaction with
		the system very quickly (although in fact our simple model is closed). Thus Markov processes are used
		to model quantum systems and in our discrete case these will be Markov chains.	
	\subsection{Markov decision processes}
		A Markov decision process is a controllable Markov process. We will present MDPs in a slightly different
		way than is usual, to make it easier to adapt the quantum model to the formalism of the MDP. All Markov
		processes in this paper will be Markov chains, such that the time set $\SOT=\{0, ..., N-1\}$ is finite.
		A MDP is a set of states $\SOS$ together with sets of actions $\SOA \subseteq\SOS$ that can be
		taken at each state $\st\in\SOS$ and time $\ti\in\SOT$ and transition probabilities 
		$P_{\stc,\str}$.
		If the current state at time $\ti$ is $\st$ then $P_{\stc,\str}$ is the
		probability that state $\stc\in\SOA$ transitions to state $\str$ in the next timestep where
		the action $\stc$ is an intermediary state that is chosen by the agent out of the set $\SOA$. 
		The agent chooses the action by following a policy,
		a function $\pi:\SOS\times\SOT \rightarrow \SOS $ such that $\pol \in \SOA$. (There
		are formalizations of the policy function which enable the agent to act probabilistically and in a non-Markovian 
		way but this level of generality is not needed in this paper \cite{vanderwal}.)
		
		The model allows only one way to take the state $\st$ to $\stc$ at time $\ti$ and it is assigned a control cost
		$\costm{\ti}$. An alternate formulation of MDPs uses rewards (negative of costs) associated to the
		state resulting at the next timestep. If $\costm{\ti}$ is independent of time the subscript $\ti$ is dropped.
		
		In summary, a single timestep proceeds as follows:
		\begin{equation} \label{proc:onestep}
			\stm{\ti} \stackrel{\pi_i}\longrightarrow  \stcm{\ti} \stackrel{\outcome_\ti}\longrightarrow \stm{\ti+1}
		\end{equation}
		where $\outcome_\ti$ is a random outcome choosing $\str_\ti=\stm{\ti+1}$ with probability $P_{\stcm{\ti},\str_\ti}$.
		
		Markov decision processes can be run for a definite or indefinite amount of time. 
		With the state preparation problem, the experiment
		runs for a fixed amount of time ($\steps$ timesteps). This is called the finite-horizon problem. Experiments which run for an
		indefinite amount of time, which can stop at a time unknown to the experimenter or run for an infinite amount
		of time ($\SOT=\mathbb{N}$) are interesting and arise in Quantum control but the theory of modeling such systems with MDPs will not be
		discussed in this paper.

		For a finite-horizon problem, the state of the system at the end of the experiment $\stf$ is of importance.
		The terminal cost function $\tcf(\stf)$ evaluates $\stf$. If the experiment starts with initial state $\sti$,
		with transition outcomes $\outcome$ and
		the controller follows the policy $\polf$ then the total cost is
		\begin{equation}
			G(\sti,\polf,\outcome) = \sum_{\ti=0}^{\steps-1}\costsm{\ti} + \tcf(\stf).
		\end{equation}
		where the $\stm{\ti}$ are obtained from the procedure \eqref{proc:onestep}.
		Let
		\begin{equation}
			\ctgtipol{0}{\polf}(\sti) = \Exp[G(\sti,\polf,\outcome)],
		\end{equation}
		the expected total cost of the experiment, following policy $\polf$ with initial state $\sti$. For a fixed
		$\sti$ we would like to find an optimal policy $\polo$ that minimizes $\ctgtipol{0}{\polf}(\sti)$. In that case we
		write $\ctgtipol{0}{*}(\sti)=\ctgtipol{0}{\polo}(\sti)$.

		The Dynamic programming algorithm \cite{bellman,bertsekas,suttonbarto} is a well-known technique for finding the
		optimal policies $\polo$. To state it, we define the optimal cost-to-go at time $\tia$,
		\begin{equation}
			\ctgtipol{\tia}{*}(\st) = \Exp\left[\sum_{\ti=\tia}^{\steps-1}\costsm{\ti}+\tcf(\stf)\right],
		\end{equation}
		where, as before, the $\stm{\ti}$ are obtained from the process \eqref{proc:onestep}. The optimal cost-to-go
		tells us the minimum remaining total cost when starting in state $\st$ at time $\ti$.
		Under suitable conditions \cite{bertsekas},
		\begin{equation}\label{dpa}
			\ctgtipol{\ti}{*}(\st) = \bellmanit{\ctgtipol{\ti+1}{*}(\st)} = 
									\inf_{\stc \in \SOA}\left\{\Exp\left[\cost+\ctgtipol{\ti+1}{*}(\str)\right] \right\},
		\end{equation}
		for $0 \le \ti < \steps$. The dynamic programming algorithm \eqref{dpa} states that the optimal
		cost-to-go functions can be computed by iterating backwards in time.
		
		We will be concerned with cases when the infimum in equation \eqref{dpa} is
		achieved and so (an) optimal policy $\polo(\st,\ti)=\stc$ is obtained from taking an argmin in equation \eqref{dpa}.
		In the next section we will apply this framework to a general discretized physical system.
	\subsection{The state preparation problem as a Markov decision process}\label{mdp}
		By truncating the Baker-Campbell-Hausdorff (BCH) formula or using the Trotter formula 
		\cite{NC,feynman82,sornborger,bvh} on the system dynamics of
		a finite-dimensional open quantum system, a discrete time approximation can be obtained for the system:
		\begin{equation}\label{dia:circuit}
			\stm{\ti} \longrightarrow \framebox{Controllable $U_c$} \longrightarrow
								  \framebox{Free $U_f$} \longrightarrow
								  \framebox{Meas.} \longrightarrow
								\framebox{Noise} \longrightarrow
								\stm{\ti+1}.
		\end{equation}
		The circuit \eqref{dia:circuit} represents a single timeslice of system evolution. It is applied repeatedly to 
		the starting state $\sti$. The ordering of the blocks in \eqref{dia:circuit} becomes less relevant as the length 
		of the timeslice $\Delta t \rightarrow 0$. Questions arise pertaining to the behavior of the discrete model
		in the continuous time limit. These issues are discussed in \cite{bvh}.
		
		The structure of the circuit mirrors our formulation of a MDP. The state $\stm{\ti}$ enters from the left.
		The $U_c$ block is the part where the agent takes an action, choosing a new state $\stcm{\ti}\in\SOA$. The
		action is the state resulting from a unitary operation applied to $\st$ chosen from a set of control Hamiltonians
		$\mathcal{H}_\ti$ that the agent has available at time $\ti$. 
		Specifically, $\SOA=\{\ee^{\ii H\Delta t}s \mid H \in \mathcal{H}_\ti\}$.
		
		If $\SOA$ does not depend on $\st$ and $\ti$ then we can assume that
		any desired action is in $\SOA$, and disallow physically unrealistic or undesirable actions at a given state by
		making the cost prohibitive. In that case (and in the examples) we effectively set $\SOA=\SOS$. If the action
		is $\stc$, the cost paid at timestep $\ti$ is $\cost$. The cost function
		will be explained in a later section. If there are multiple Hamiltonians in $\mathcal{H}_\ti$ which result in
		the same action we choose the best one under the same physical constraints that the cost function was chosen
		(in order to make the cost function well-defined).
		
		The free evolution unitary $U_f$ is the next step. This is coherent evolution of the system that
		cannot be controlled directly by the controller. 
		This step can be folded into the previous step by composing $U_f$ with the action.
		
		The next two steps, measurement and noise can also be composed together into a single
		non-trace preserving quantum operation that sends the state $\stc$ to one of the
		states $\{R_{\alpha}(\stc)\}_{\alpha \in \Omega} \subseteq \SOS$ with probability distribution
		$\Pr(\alpha|\stc)=P_{\stc,R_{\alpha}(\str)}$ and set of measurement outcomes $\Omega$. In the finite dimensional
		system we consider, $\Omega$ will be a finite set. Then the Dynamic programming algorithm \eqref{dpa}
		becomes
		\begin{equation}\label{dpa2}
			\ctgtipol{\ti}{*}(\st) = 
			\min_{\stc \in \SOA}\left\{\int_{\Omega}\cost+\ctgtipol{\ti+1}{*}(R_\alpha(\stc))\ \mathrm{d}\Pr(\alpha|\stc) \right\},
		\end{equation}
		A few properties of the Bellman iteration $\bellmanit{\cdot}$ from \eqref{dpa} that are useful for analyzing the
		examples will be summarized.
		\begin{proposition}\label{propsu}
		Let $\SOS$ be compact and $\Omega$ be finite. Then
		$\bellmanit{\cdot}$ is a map on $\mathcal{C}(\SOS)$. That is, $\bellmanit{J}$ is continuous if $J$ is. Additionally,
		\begin{enumerate}
			\item If $\cost \ge 0$ for all $\st,\stc \in \SOS$ with equality for $\st=\stc$ then the constant function
					$J(s)\equiv C$ is a fixed point for
					$\bellmanit{\cdot}$ for all $C$.
			\item $\norm{\bellmanit{J}-\bellmanit{V}}_{\infty}		\label{contmap}
					\le \norm{J-V}_{\infty}$ In particular, $\bellmanit{\cdot}$ is a continuous map.
		\end{enumerate}
		\end{proposition}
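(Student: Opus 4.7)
The plan is to handle the three assertions in order of difficulty, beginning with the two shorter items and then tackling the preservation of continuity, which is the main obstacle.

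First I would dispose of item~1. Writing the integral in \eqref{dpa2} as a finite sum (since $\Omega$ is finite), the Bellman operator takes the explicit form
\begin{equation*}
\bellmanit{J}(\st)=\inf_{\stc\in\SOA}\Bigl\{\cost+\sum_{\alpha\in\Omega}\Pr(\alpha\mid\stc)\,J(R_\alpha(\stc))\Bigr\}.
\end{equation*}
Plugging in $J\equiv C$, the sum becomes $C$ regardless of $\stc$, and the infimum reduces to $C+\inf_{\stc}\cost$. Under the hypothesis $\costonly\ge 0$ with equality when $\st=\stc$, the infimum is attained at $\stc=\st\in\SOA$ (recalling $\SOA\subseteq\SOS$ and, as discussed right before the proposition, we are in the regime $\SOA=\SOS$), giving $\bellmanit{J}(\st)=C$.

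Next I would prove item~2 by the standard ``$\varepsilon$-minimizer swap'' argument. Fix $\st\in\SOS$ and, for $\varepsilon>0$, pick $\stc_\varepsilon$ that nearly achieves the infimum defining $\bellmanit{V}(\st)$. Using the same $\stc_\varepsilon$ as a suboptimal choice for $\bellmanit{J}(\st)$ yields
\begin{equation*}
\bellmanit{J}(\st)-\bellmanit{V}(\st)\le \sum_{\alpha}\Pr(\alpha\mid\stc_\varepsilon)\bigl[J(R_\alpha(\stc_\varepsilon))-V(R_\alpha(\stc_\varepsilon))\bigr]+\varepsilon\le \norm{J-V}_\infty+\varepsilon,
\end{equation*}
since the weights $\Pr(\alpha\mid\stc_\varepsilon)$ sum to $1$. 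By symmetry in $J$ and $V$, the reverse inequality holds, and letting $\varepsilon\to 0$ gives the nonexpansion bound. Continuity of $\bellmanit{\cdot}$ as a self-map of $\mathcal{C}(\SOS)$ equipped with the sup norm is then automatic.

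The more delicate part is showing that $\bellmanit{J}$ is a continuous function of $\st$ whenever $J$ is, because it involves an infimum over $\stc$. Here I would invoke Berge's maximum theorem (or prove the needed special case directly). Treat $F(\st,\stc):=\cost+\sum_{\alpha}\Pr(\alpha\mid\stc)J(R_\alpha(\stc))$; I would assume (as is implicit in the model of Section~\ref{mdp}, where $\costonly$, $\Pr(\alpha\mid\cdot)$, and the $R_\alpha$ all come from finite-dimensional quantum operations depending smoothly on the action) that $F$ is jointly continuous on $\SOS\times\SOS$. Since $\SOS$ is compact, $F$ is uniformly continuous, so given $\eta>0$ there is $\delta>0$ with $|F(\st,\stc)-F(\str,\stc)|<\eta$ whenever $\|\st-\str\|<\delta$, uniformly in $\stc$. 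The same $\varepsilon$-minimizer argument as above then yields $|\bellmanit{J}(\st)-\bellmanit{J}(\str)|\le \eta$, establishing continuity. The main technical hurdle is precisely this uniform-in-$\stc$ control, and compactness of $\SOS$ together with finiteness of $\Omega$ is what makes it go through; if the action set $\SOA$ were allowed to vary with $\st$ the argument would instead require the hemicontinuity hypotheses of Berge's theorem, which is why the remark preceding the proposition restricts to $\SOA=\SOS$.
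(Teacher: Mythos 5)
Your proposal is correct, and for the one part the paper actually proves it takes essentially the same route: your $\varepsilon$-minimizer swap for item~II is just the rigorous form of the paper's one-line estimate $|\min_{\stc} f - \min_{\stc} h| \le \max_{\stc}|f-h|$ followed by pulling out $\norm{J-V}_\infty$ and using $\sum_\alpha \Pr(\alpha|\stc)=1$. Where you go further is that the paper's proof stops there: it silently omits item~I and, more notably, the headline claim that $\bellmanit{J}$ is continuous in $\st$ when $J$ is. You supply both. Your item~I argument is the obvious one and is fine, provided one accepts (as you note, and as the text preceding the proposition licenses) that $\st\in\SOA$ so the zero-cost action is admissible. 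Your continuity argument via uniform continuity of the joint objective $F(\st,\stc)$ on the compact square is the standard special case of Berge's theorem for a constant correspondence, and it is the honest way to justify the claim; the only caveat is that it needs joint continuity of $\costonly$, $\Pr(\alpha|\cdot)$ and $R_\alpha$, which the proposition does not state as a hypothesis and which can fail at degenerate points (e.g.\ a filtering operator whose outcome probability vanishes at an endpoint of $[0,1]$, where $R_\alpha$ need not extend continuously). Since the product $\Pr(\alpha|\stc)\,J(R_\alpha(\stc))$ stays bounded there, this is a repairable edge case, but it is worth flagging that both you and the authors are implicitly assuming regularity of the model data; your version at least makes the assumption explicit, which the paper does not.
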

		\begin{proof}
		Item \ref{contmap}: Fix $\st$. Then 
		\begin{align*}
			|\bellmanit{J}(\st)-\bellmanit{V}(\st)| & = |\min_{\stc}\{\cost+\sum J(R_\alpha(\stc))\Pr(\alpha|\stc)  \} \\
					& \qquad - \min_{\stc}\{\cost+\sum V(R_\alpha(\stc))\Pr(\alpha|\stc) \}| \\
				    \le & \max_{\stc}|\sum (J(R_\alpha(\stc))-V(R_\alpha(\stc)))\Pr(\alpha|\stc)|  \\
				    \le & \max_{z}|J(z)-V(z)|\sum \Pr(\alpha|\stc) \\
				    =   & \norm{J-V}_{\infty}. 
			\end{align*}
		\end{proof}
		We expect proposition \ref{propsu} is well-known in the field of discrete-time stochastic control
		but are unable to find sources. We now turn our attention to the specific models.
\section{Optimal Control Problems for Cylindrically Symmetric Qubit States}
    We will demonstrate this technique with two single qubit models. In order to make the optimization procedure needed
    to solve the Bellman equation straightforward, we choose one of the simplest one-qubit systems, one that 
    can be parameterized by a single real number. One way to satisfy this constraint is to confine an arbitrary state
    $\st$ in the system $\SOS$ to half of a great circle on the surface of the Bloch sphere $\bloch$.
    We choose a basis $\{\ket{0},\ket{1}\}$ so that if $\stm{\sr} \in \SOS$, then
    \begin{equation}
		\stm{\sr}=\sqrt{1-\sr}\ket{0}+\sqrt{\sr} \ket{1},
	\end{equation}
    for $\sr \in [0,1]$. 
	We will hereafter abuse the notation by identifying the system $\SOS$ with its parameterization $[0,1]$ 
	and state $\stm{\sr}$ with its parameter $\sr$.
    When an additional parameter $\phi$ is introduced to represent relative phase
	the entire Bloch sphere is swept out by rotating the set $\SOS$ about the $z$-axis (the axis
	containing $\ket{0}$ and $\ket{1}$) as $\phi$
	is varied. We may therefore consider any state  $\st \in \SOS$ to be the equivalence class
	$\{\mathcal{R}_z(\phi)\st\}$ ($\mathcal{R}_z$ is the rotation about the $z$-axis operator) making 
	the set of equivalence classes $\SOS$ a cylindrically symmetric
	version of $\bloch$.
	
	There are several reasons why our cylindrically-symmetric $\SOS$ is not an unrealistic choice.
	If the experiment requires the preparation of a unpopulated state $\ket{0}$ without any consideration
	to the relative phase, then the terminal cost function will be cylindrically symmetric. If the control
	step is strobing an appropriately aligned magnetic field, (for example, one with no $\sigma_x$ or $\sigma_z$
	in the Hamiltonian) then the control cost function $\costonly$ is cylindrically symmetric. In fact, there
	are several families of physically-motivated operators which preserve $\SOS$ and hence are cylindrically
	symmetric.
	
\subsection{Operators on $\SOS$}
	We can think of a noisy quantum operation $\mathcal{E}:\st\mapsto \str$ being applied to a system 
	$\rho_\st=\ket{\psi_\st}\bra{\psi_\st}$ as the system interacting 
	with its environment $\rho_\text{e}$. 
	This can be modeled by an entangling unitary operator $U_\text{se}$ acting on the 	
	environment-system $\rho_\text{se}$ followed by a measurement $\mathcal{M}_\text{e}$ on the environment. 
	When the record of
	$\mathcal{M}_\text{e}$ is ignored the resultant system state 
	$\rho_{\str}=\tr_\text{e}(\sum_j \noadj{M_j}\noadj{U_\text{se}}\rho_\st \adj{U_\text{se}}\adj{M_j})$ 	
	is left as a mixed state meaning that in general $\str \not\in \SOS$. Now suppose the experimenter is allowed
	to monitor the environment (for example, the environment is an electromagnetic channel monitored with a photodetector
	\cite{salehteich}) and know which $M_j$ occurred. 
	Because in that case $\str \in\SOS$ we will use the 
	latter scenario in our model to derive some physically motivated operations on $\SOS$. As explained below
	some important quantum operations (amplitude damping, phase damping, etc.) have Kraus operators which preserve
	$\SOS$ and these operators can serve as measurement operators for our system.

	Considering the set of states as column vectors,
	$\{\ket{\psi_{\stm{\sr}}}=\left(\begin{smallmatrix}\sqrt{1-\sr}\\\sqrt{\sr}\end{smallmatrix}\right)\mid \stm{\sr}\in\SOS\}$ 
	is preserved by the real nonzero nonnegative-entry $2 \times 2$ matrices
	$M_2(\mathbb{R}^+)\setminus \{0\}$, up to a normalization factor. 

	A finite subset $\{K_j\} \subset M_2(\R^+)$ satisfying the completeness relation
	\begin{equation}
		\sum_{j}\adj{K_j}\noadj{K_j} = I_2, \label{measnorm}
	\end{equation}
	can represent the operators for either a complete measurement or a trace-preserving 
	quantum operation \cite{NC} (we think of it as the former).
	It is then a simple calculation to check that the $K_j$ belong to either or both the filtering class (rank 2)
		\[ \mathcal{F}=\{\big(\begin{smallmatrix}a && 0 \\ 0 && b\end{smallmatrix} \big), 
					\big(\begin{smallmatrix}0 && a \\ b && 0 \end{smallmatrix} \big) \mid
					ab \ne 0\}\]
		or the jump class (rank 1)
		\[ \mathcal{J}=\{\big(\begin{smallmatrix}0 && a \\ 0 && b \end{smallmatrix}\big),
					\big(\begin{smallmatrix}a && 0 \\ b && 0 \end{smallmatrix}\big) \}.
		\] 
		The operators in $\mathcal{F}$ arise from the Bayesian
		updating of the state conditioned on the result of the measurement while the operators
		in $\mathcal{J}$ are direct jumps (e.g. wave function collapse) to a fixed state. 

		An operator $K \in \mathcal{F} \cup \mathcal{J}$ is also considered a map on $\SOS$,
		\[ K:\stc\mapsto\str\]
		such that 
		\[\ket{\psi_{\str}} = \frac{K\ket{\psi_{\stc}}}{\norm{K\ket{\psi_{\stc}}}}. \]
		Furthermore if $K$ is taken to be a measurement operator then $\Pr(K|\stc)=\norm{K\ket{\psi_{\stc}}}^2$
		for all $\stc \in\SOS$.
		Table \ref{table:ops} describes the measurement operators in that sense. 
		
		Since the discrete
		systems in this paper are approximations of real continuous systems, in a future paper we
		investigate such operators in the infinitessimal time limit--the filtering operators go smoothly to
		the identity operator while the jump operators remain fixed jumps (although the probabilities of occurrence
		go to zero).
    \begin{table}\caption{Measurement operators}\label{table:ops}
    \begin{center}
	\begin{tabular}{rccl}\toprule
    Name & Operator $K$ & $s'$ & $\Pr(K|s)$ \\ \midrule
    $f_1(a,b)$ &
    \big($\begin{smallmatrix}\sqrt{a} && 0 \\ 0 && \sqrt{b} \end{smallmatrix}\big)$ & 
		$\frac{bs}{(b-a)s+a}$& $(b-a)s+a$ \\
	$f_2(a,b)$ &
	\big($\begin{smallmatrix}0 && \sqrt{b} \\ \sqrt{a} && 0 \end{smallmatrix}\big)$ &
		$\frac{a(1-s)}{(b-a)s + a}$ & $(b-a)s+a$ \\
	$j_1(a,b)$ &
	\big($\begin{smallmatrix}0 && \sqrt{a} \\ 0 && \sqrt{b} \end{smallmatrix}\big)$ &
		$\frac{b}{a+b}$ & $(a+b)s $ \\
	$j_2(a,b)$ &
	\big($\begin{smallmatrix}\sqrt{a} && 0 \\ \sqrt{b} && 0 \end{smallmatrix}\big)$ &
		$\frac{b}{a+b}$ & $(a+b)(1-s)$\\ \bottomrule
    \end{tabular}
    \end{center}
    \end{table}
    
    Some examples are \cite{NC} amplitude damping $\{f_1(1,1-\gamma),j_1(\gamma,0)\}$, 
	phase damping $\{f_1(1,1-\gamma),j_1(0,\gamma)\}$,
    generalized amplitude damping $\{f_1\big(p,p(1-\gamma)\big),f_1((1-p)(1-\gamma)),j_1(p\gamma,0),j_2(0,(1-p)\gamma)\}$ and
    the bit-flip channel $\{f_1(p,p),f_2(1-p,1-p)\}$.
\subsection{Filtering and Jump operators}
    When $M_i \in \mathcal{J}$ for all $0\le i < m$, the outcomes are in $m$ discrete states, call them $\ket{\psi_i}$,
	$0 \le i <m$, possibly nondistinct. There may be interesting problems when the set of the $\ket{\psi_i}$ are nonorthogonal,
	but we will not consider jump-only measurements further in this paper.

	The measurement sets $\mathcal{M}$ that we will use will contain at least one filtering operator and zero or more jump
	measurements so that $\mathcal{M}$ contains two or more measurement operators. Experience suggests that having more than
	two nonorthogonal measurement operators can make the system easier to control.
\section{Cost functions for a simple model}
	The two examples that we discuss in this paper differ only by their cost functions.
	For the first, the affine model, the terminal cost function is equation \eqref{eq:vnaff}, which,
	when recast in the $\sr$-parameterization is
	\begin{equation}\label{eq:jnaff}
		\tcfm{\text{aff}}(\sr_{\steps})=|\sr-\srt|=\sr,
		\end{equation}
	where the target state $\srt=0$ if $\stt=\ket{0}$.
	Other functions which have their global minima at the target states could be justified. 
	One might decide that a resultant state more than $0.2$ units from the target state in $\SOS$ is
	unacceptable. Then we might have a $\tcfm{}$ like
	\begin{equation}\label{eq:jnthresh}
		\tcfm{\text{thresh}}(\sr) = \begin{cases} |\sr-\srt|, & \text{if $|\sr-\srt| \le 0.2$} \\
							100, & \text{if $|\sr-\srt| > 0.2$.} \end{cases}
	\end{equation}
	The number $100$ was chosen to be much larger than typical values of the cost-to-go functions (essentially infinity)
	but not large enough to cause overflows in computation.
	
	With our discrete time formulation of the control problem, we do not directly deal with differential equations and
	so nonsmooth functions like $\tcfm{\text{thresh}}$ can be incorporated naturally into the model.
	\subsection{Cost-to-move function}
	In addition to the terminal cost there are resources that are used by the controller at every timestep to drive the
	system. The use of these resources is quantified by the cost-to-move function. For simplicity, as mentioned in 
	section \ref{mdp}, the cost-to-move function $\cost$ will only depend in the current state $\st$ and 
	the state $\stc$ that the controller will move the system to in the next timestep by using one of its Hamiltonians.
	
	Part of what makes quantum control hard is that the controller can not move the system to any state that it
	chooses in a single step. Otherwise, the step preparation problem would be trivial. The transitions
	$\st \mapsto \stc$ which are inadmissible (disallowed) in the model are due to physical considerations. 
	The Hamiltonians that the controller has available, the size of the timestep and the way the system is discretized determine which pairs $(\st,\stc)$ are inadmissible. In that case, we assign
	\[ \cost = \infty \]
	for an inadmissible pair $(\st,\stc)$. As in equation \eqref{eq:jnthresh}, 
	for computational purposes $\infty$ is taken to be a large number.
	The function $\ctgtipol{\ti}{\stc,*}$ is the cost-to-go if the next action is $\stc$ and then an optimal policy
	is followed thereafter. Optimizing $\ctgtipol{\ti}{\stc,*}$ over $\stc$ is the same as the optimization step in the 	
	dynamic programming algorithm (DPA) equation \eqref{dpa} and so it often shows up in implementations of the DPA.
	The advantage in making all pairs in $\SOS \times \SOS$ admissible and redefining infinity to a large number is that 
	it makes the domain of $\ctgtipol{\ti}{\src,*}(\sr)$
	(in variables $(\sr,\src)$) rectangular and finite on the entire domain, easing the optimization process.
	
	As for reasonable cost functions, the most straightforward metric, for states on the Bloch sphere $\bloch$, might
	be the angular distance.
	Two states $\sr,\src\in\SOS$ have angular distance
	\[ d_{\theta}(\sr,\src)=\arccos{\big(1-2\src-2\sr+4\sr\src+4\sqrt{\sr(1-\src)\src(1-\src)}\big)} \].
	
	One might imagine that in the $U_c$ control step above the controller has some Hamiltonians that can
	be switched on and off for a percentage of the time afforded to the controller. Then perhaps all states
	that are a certain distance away in the $d_{\theta}$ metric can be reached, for a similar use of resources.
	We might have then
	\begin{equation}\label{eq:gthresh}
		\costonly^\text{thresh}(\sr,\src) = \begin{cases} 0, & \text{if $d_{\theta}(\sr,\src) < 0.2$} \\
									100, & \text{if $d_{\theta}(\sr,\src)\ge 0.2$.} \end{cases}
	\end{equation}
	On a similar vein to the terminal cost, $100$ is an arbitrary `big' number so that if 
	$\ctgtipol{0}{*}(\sr) \ge 100$ then there is no way to control
	the system from $\sr$ to $\srt$ in the model.

	Other possible cost functions are $d_\theta$ or $d_\theta^2$. Here we imagine that applying
	the Hamiltonians in the control step $U_c$ cost energy which is proportional to how far in the
	$d_\theta$ metric that the state has moved \cite{margolus}. In this case we would like to balance the usage of energy
	to the closeness of the resultant state at the end of the experiment to the target state. In this situation,
	having infinite energy ($\cost=0$) might allow the controller to make very strong pulses toward
	the end of the experiment. Whether or not that is an optimal control solution depends on the probability
	of states near the target state decaying away before the end of the experiment. In \cite{belavkin04}
	the authors faced a similar problem.
	
	In order to make the system solvable analytically, we will use the second order Taylor polynomial of 
	$d_{\theta}^2$ 		expanded about $(\frac{1}{2},\frac{1}{2})$, giving a quadratic symmetric in its arguments.
	It is
	\begin{equation}\label{eq:gquad}
			\costonly^\text{aff}(\sr,\src) = 4(\sr-\src)^2. 
	\end{equation}
	Using the function  $C(s,\hat{s})=|s-\hat{s}|$ would be even simpler but leads to trivial controls when
	equation \eqref{dpa} is solved for the example model in the next section.
\section{Solving the DPA for the models}
	Putting the affine terminal cost \eqref{eq:jnaff} and the quadratic movement cost \eqref{eq:gquad} 
	into equation \eqref{dpa} we get
	\begin{align}\label{eq:dpaaff}
	\ctgtipol{\ti-1}{*}(\sr)& = 
			\min_{0\le \src\le 1}\left\{4(\sr-\src)^2 + 
				\sum_{j}{\ctgtipol{\ti}{*}\left(\frac{a_j\src +b_j}{c_j\src +d_j}\right)(c_j\src +d_j)}\right\},  \\
	\notag \tcf(\sr) &=\sr,
	\end{align}
	with constants $a_j$, $b_j$, $c_j$ and $d_j$ which depend on the measurement set chosen from table \ref{table:ops}.
	We will also 
	solve equation \eqref{dpa} numerically using the threshold cost functions \eqref{eq:jnthresh} and \eqref{eq:gthresh}.

	In order to solve equations \eqref{eq:dpaaff} numerically, we must specify the measurement set to be used.
	Numerous examples that we have experimented with have the same basic behavior, which is typified
		by the system with measurement operators
		$\{f_1(a,1),j_2(0,1-a)\}$. With some rescaling equation \eqref{eq:dpaaff} becomes
	\begin{equation}\label{eq:dpaaffa}
		\bellmanit{J}(\sr)=\min_{0\le \src\le1}\left\{
			(\sr-\src)^2+
			J\left(\frac{\src}{(1-a)\src+a}\right)\Big((1-a)\src+a\Big)+ \\
			J(1)\Big(1-\big((1-a)\src+a\big)\Big)
			\right\}.
	\end{equation}
	The parameter $0 \le a \le 1$ determines how unstable the target state is. If the system is a two-level atom
	then it is the rate of decay. We can vary $a$ in order to study how the optimal policies change with the instability
	of the system.
	\subsection{Solving numerically}
		A simple way to proceed numerically is to discretize $\SOS$ into say, 100 equally
		spaced points. For the resolution desired this naive assumption turns out to be acceptable
		for reasons that will become clear in the analytic solution section.
		The functions $\ctgtipol{\ti}{*}$ can be approximated as a discrete list of values and 
		computed recursively by minimizing over a discrete list. Figure \ref{graphs:all}a. shows a graph 
		of $\ctgtipol{\ti}{*}$ for $\ti=0,1,\ldots,5$ when $a=0.8$.
		Similar graphs are obtained for $0<a<1$ and for systems with other sets of measurement
		operators.
	\begin{figure}[!b]\label{graphs:all}
		\begin{center}
			\hspace{0.3em}
			\begin{overpic}[scale=.7]{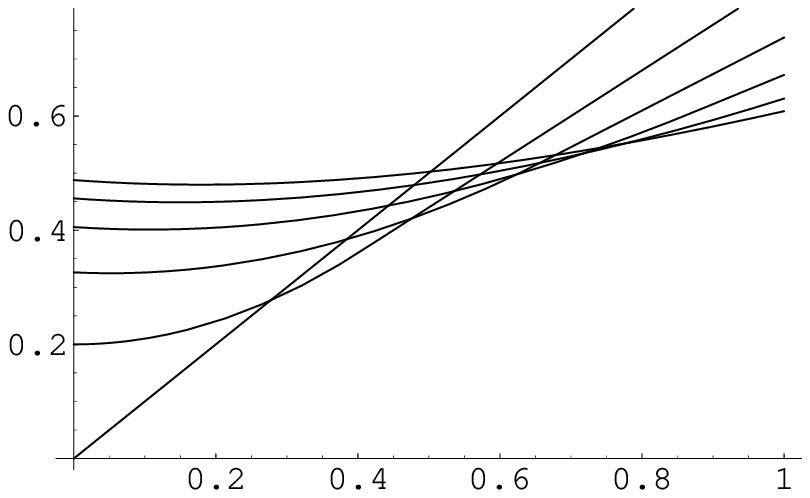} 
				\put(0,60){a.}
				\put(100,4){\small$x$}	
				\put(18,10){$\ctgtipol{5}{*\text{aff}}$}
				\put(10,45){$\ctgtipol{0}{*\text{aff}}$}
			\end{overpic}
		\quad\enspace
			\begin{overpic}[scale=.7]{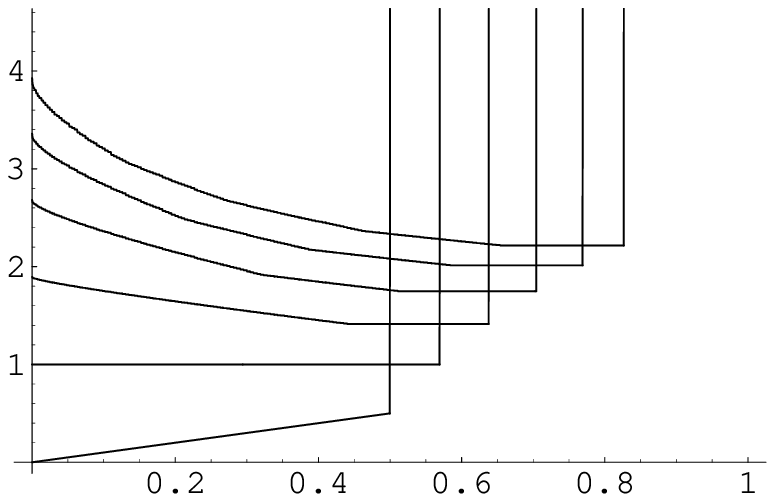} 
				\put(-3,60){b.}
				\put(98,4){\small$x$}
				\put(10,10){$\ctgtipol{5}{*\text{thresh}}$}
				\put(15,45){$\ctgtipol{0}{*\text{thresh}}$}
			\end{overpic}
			\\
			\begin{overpic}[scale=.7]{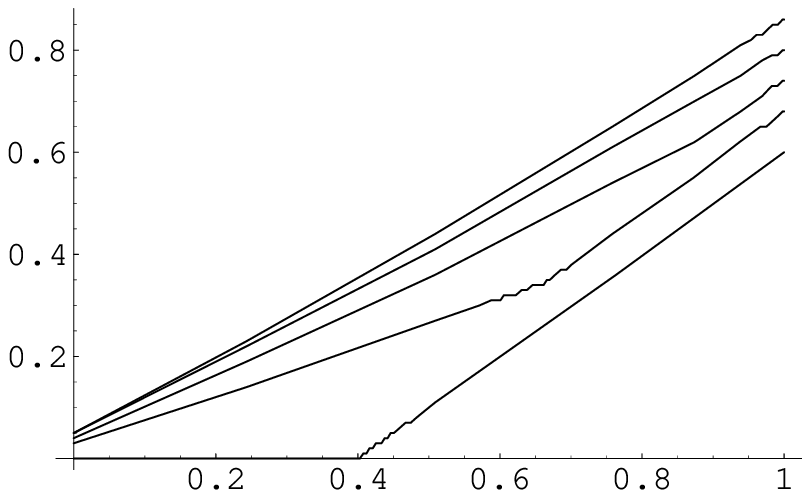}
				\put(0,60){c.}
				\put(100,4){\small$x$}
				\put(30,25){$\polf^{*\text{aff}}_0$}
				\put(55,10){$\polf^{*\text{aff}}_4$}
			\end{overpic}
		\quad
			\begin{overpic}[scale=.7]{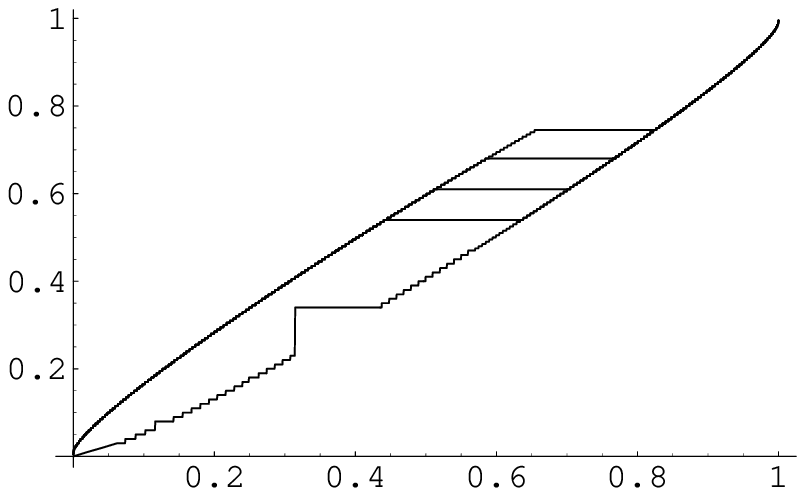}
				\put(0,60){d.}
				\put(99,4){\small$x$}
				\put(10,22){$\polf^{*\text{thresh}}_0$}
				\put(55,25){$\polf^{*\text{thresh}}_4$}
			\end{overpic}
			\caption{\small Example costs-to-go and optimal policies.}
		\label{fig-label}
		%
\end{center}
	\end{figure}
		Figure \ref{graphs:all}c. shows the optimal policies $\polo(\sr,\ti)$) for the affine problem.
		From the picture, the  $\polo(\sr,\ti)$ appear to be piecewise affine. The analytic
		solution will tell us that it is not the case, although the $\polo(\sr,\ti)$ are closely approximated
		by piecewise affine functions.
		
		When we use the threshold functions, a typical run might look like figure \ref{graphs:all}(b.,d.).
		The cost-to-go functions are quite different for the two models. One can easily see how the cutoff
		in $\tcf{\text{thresh}}$ propagates through the $\ctgtipol{\ti}{*}$. We expect an analytic analysis of this
		model to be fairly easy. Another interesting feature which is shared with the latter model is
		that the local minima for the various value functions do not lie at the target state. This is
		due to the large probability of a state near the target state decaying away before the end of the
		experiment.
	\subsection{Approximating the solution analytically}
		We can explain some of the behaviors that arose with the affine/quadratic model \eqref{eq:dpaaffa}.
		The goal is to show that if cost-to-go $J$ is quadratic then $\bellmanit{J}$ is approximately quadratic. 
		Appealing to the continuity of the iteration map, we can obtain reasonable approximations to 
		$\bellmanitpow{J}{j}$ for small values of $j$.
		Due to constant-function fixed points of $\bellmanit{\cdot}$,  for larger values of $j$, $\bellmanitpow{J}{j}$ 
		becomes uninteresting.
		
		Suppose $J(\sr)=A\sr^2+B\sr+C$. Considering the sum in equation \eqref{eq:dpaaff}, we get linear terms from
		jump-type measurements and for each filtering-type measurement operator we get a term like
		$J(\frac{a\src +b}{c\src +d})(c\src +d)$. Evaluating this, we get a quadratic term and a fractional term $N(\src)$
		given by
		\begin{equation}\label{nonlinterm}
			N(\src)=\frac{A(ad-bc)^2}{c^2(d+c\src)}.
		\end{equation}
	
		The second order Lagrange interpolating polynomial agreeing with $N(\src)$ at the
		points $\src_i=0,\frac{1}{2},1$ is
		\begin{equation*}
			L(\src)=
				\frac{A(ad-bc)^2}{c^2}\left(\frac{1}{d}-
				\frac{c(3c+2d)}{d(c+d)(c+2d)}\src+
				\frac{2c^2}{d(c+d)(c+2d)}\src^2\right)
			\end{equation*}
			with error bounds \cite{numerical}
			\begin{equation*}
				|N(\src)-L(\src)| \le \frac{\max_{0\le\xi\le 1}|N^{(3)}(\xi)|}{3!}|\src(\src-\frac{1}{2})(\src-1)|.
			\end{equation*}
			The maximum error is bounded by $\left|\frac{\alpha}{(d+c)^4 12\sqrt{3}}\right|$.
			For an operator $f_1(a,b)$ or $f_2(a,b)$, the bound becomes
			\[ \left|\frac{A a^2(b-a)}{b^2 12\sqrt{3}}\right|. \]
			For example, if $J$ was an affine function ($A=0$) then the approximation is exact.
	
			Let the approximate to $\bellmanit{J}$ be called $\bellmanitapp{J}$. 
			Computing $\bellmanitapp{J}(\sr)$ amounts to finding the minimum of a quadratic parameterized 
			by $\sr$ on a closed interval. 
			$\bellmanitapp{J}$ is thus a piecewise quadratic function. 
			Likewise to approximate $\bellmanitpow{J}{2}$ we compute $\bellmanitapp{\bellmanitapp{J}}$
			and get a bound on the error.
			We do not find the occasion to repeat this procedure to find, for example an approximation to 
			$\ctgtipol{0}{*}$ due to the algebra involved and the efficacy of the numerical method. 
			The analytic solution explains why the graphs of
			the $\polo$ look piecewise linear -- because the graphs of the $\ctgtipol{\ti}{*}$ look piecewise quadratic.
\section{Conclusion}
	We have demonstrated a method for controlling a quantum system by controlling its simulation.
	We have shown that the method works by applying it to two simple models. The method has been
	applied to more complicated models: a qubit undergoing decoherence, which has a more complicated
	state space, and two qubits undergoing disentanglement. These situations will appear in a future
	paper.
	
	We need to resolve issues with the discretization. How small does the timeslice have to be
	to get `good' control? How does the MDP turn into a stochastic differential equation as the
	timeslice goes to $0$?

	We are concerned with how the control problem scales with the size of the system. When searching
	the state space in the optimization step we rely on good parameterizations of the system. These
	become harder to come by as the size of the system increases.
\begin{acknowledgments}
	This work has been partially supported by the National Science Foundation under
	grant ECS-0202087 and by the Defense Advanced Research Projects Agency as part of the
	Quantum Entanglement Science and Technology program under grant N66001-09-1-2025.
\end{acknowledgments}
\bibliography{qcontrol1qb}

\begin{thebibliography}{10}

\bibitem{belavkin83}
V.P. Belavkin.
\newblock Theory of the control of observable quantum systems.
\newblock {\em Autom. remote control}, 44:178--188, 1983.

\bibitem{bellman}
Richard Bellman.
\newblock {\em Dynamic Programming}.
\newblock Princeton University Press, 1957.

\bibitem{bertsekas}
D.~Bertsekas and S.~Shreve.
\newblock {\em Stochastic Optimal Control: The Discrete-Time Case}.
\newblock Athena Scientific, 1996.

\bibitem{belavkin04}
L.~Bouten, S.~Edwards, and V.~Belavkin.
\newblock Bellman equations for optimal feedback control of qubit states.
\newblock {\em J. Phys. B: At. Mol. Opt. Phys.}, 38:151--160, 2005.

\bibitem{bvh}
L.~Bouten and R.~Van Handel.
\newblock Discrete approximation of quantum stochastic models.
\newblock {\em eprint: arXiv:0803.4383v1 [math-ph]}, 2008.

\bibitem{bvhj}
L.~Bouten, R.~van Handel, and M.~James.
\newblock A discrete invitation to quantum filtering and feedback control.
\newblock {\em arXiv:math/0606118v4}, 2006.

\bibitem{numerical}
Burden and Faires.
\newblock {\em Numerical Analysis}.
\newblock Thomson Brooks/Cole, 8th edition, 1995.

\bibitem{feynman82}
R.~Feynman.
\newblock Simulating physics with computers.
\newblock {\em Int. J. Theor. Phys.}, 21:467--488, 1982.

\bibitem{jacobs03}
Kurt Jacobs.
\newblock How to project qubits faster using quantum feedback.
\newblock {\em Phys. Rev. A}, 67:030301, 2003.

\bibitem{jacobs06}
Kurt Jacobs.
\newblock Applications of feedback control in quantum systems.
\newblock {\em Proceedings of the 6th Asian Control Conference}, page~35, 2006.

\bibitem{margolus}
N.~Margolus and L.~Levitin.
\newblock The maximum speed of dynamical evolution.
\newblock {\em Physica D}, 120:188--195, 1998.

\bibitem{NC}
M.~Nielsen and I.~Chuang.
\newblock {\em Quantum Computation and Quantum Information}.
\newblock Cambridge University Press, 2000.

\bibitem{salehteich}
Saleh and Teich.
\newblock {\em Fundamentals of Photonics}.
\newblock Wiley-Interscience, 1991.

\bibitem{sornborger}
A.T. Sornborger and E.D. Stewart.
\newblock Higher order methods for simulations on quantum computers.
\newblock {\em Phys.Rev. A}, 60:1956, 1999.

\bibitem{suttonbarto}
R.~Sutton and A.~Barto.
\newblock {\em Reinforcement Learning: An Introduction}.
\newblock MIT Press, Cambridge, MA, 1998.

\bibitem{mabuchi2005}
Ramon van Handel, John~K. Stockton, and Hideo Mabuchi.
\newblock Modeling and feedback control design for quantum state preparation.
\newblock {\em To appear: J. Opt. B (Special Issue: Quantum Control)}, 2005.

\bibitem{vanderwal}
J.~van~der Wal.
\newblock {\em Stochastic dynamic programming : successive approximations and
  nearly optimal strategies for Markov decision processes and Markov games}.
\newblock Amsterdam : Mathematisch Centrum, 1981.

\end{thebibliography}
\end{document}